\def\BState{\State\hskip-\ALG@thistlm}
\newtheorem{cor}{Corollary}
\newtheorem{theorem}{Theorem}
\newtheorem{corollary}{Corollary}
\def\ScaleIfNeeded{%
\ifdim\Gin@nat@width>\linewidth \linewidth \else \Gin@nat@width
\fi } \makeatother
\begin{document}

\title{Molecular Communication with a Reversible Adsorption Receiver}
\author{
\IEEEauthorblockN{ Yansha Deng\IEEEauthorrefmark{1}, Adam~Noel\IEEEauthorrefmark{2}, Maged~Elkashlan\IEEEauthorrefmark{3}, 
Arumugam~Nallanathan\IEEEauthorrefmark{1}, 
and Karen C. Cheung\IEEEauthorrefmark{2}  
 } \IEEEauthorblockA{
\IEEEauthorrefmark{1}Department of Informatics, King's College London, London, UK\\
\IEEEauthorrefmark{2}Department of Electrical and Computer Engineering,
University of British Columbia, Vancouver, BC, Canada \\
\IEEEauthorrefmark{3}School of Electronic Engineering and Computer
Science, Queen Mary University of London, London\\
 } }

\maketitle

\begin{abstract}
In this paper, we present an analytical model for  a diffusive molecular communication (MC) system with a reversible adsorption receiver in a fluid environment.  The time-varying  spatial distribution of the  information molecules  under  the reversible adsorption and desorption reaction at the surface of a bio-receiver is analytically characterized.  Based on the spatial distribution, we derive the  number of  newly-adsorbed information molecules expected in any time duration.  Importantly, we present a simulation framework for the proposed model that accounts for the diffusion and reversible reaction. Simulation results show the accuracy of our derived expressions, and demonstrate the positive effect of the adsorption rate and the negative effect of the desorption rate on the net number of  newly-adsorbed information molecules expected. Moreover, our analytical results simplify to the special case of an absorbing receiver.
\end{abstract}

\maketitle

\section{Introduction}

 
Conveying information over a distance has been a problem  over decades, and is  urgently demanded for different dimensions and various environments. The conventional solution is to utilize  electrical- or electromagnetic-enabled  communication, which  is unfortunately inapplicable or inappropriate in very small dimensions or in specific  environments, such as in
salt water, tunnels, or human bodies.
Recent breakthroughs in bio-nano technology have motivated  molecular communication \cite{nakano2013molecu} to be  a biologically-inspired technique for nanonetworks, where devices with functional components on the scale of 1 to 100 nanometers, namely nanomachines, share information over distance via chemical signals  in nanometer to micrometer scale environments. 



Diffusion-based MC is the most simple,  general and energy efficient transportation  paradigm without the need for external energy or infrastructure,  where molecules propagate via the random motion, namely Brownian motion,  caused by collisions  
 with the fluid's molecules. Examples  include  deoxyribonucleic acid (DNA) signaling among DNA segments \cite{howard1993radom630} and  calcium signaling among cells \cite{nakano2005molecular}. 

In a practical bio-inspired system, the surface of a receiver  is covered with  selective receptors, which  are sensitive to a specific type of information molecule (e.g., specific peptides or calcium ions). The surface of the receiver  may adsorb or  bind with this specific information molecule \cite{rospars2000perireceptor}. One example is  that the influx of calcium towards the  center of a receiver (e.g., cell) is induced by  the reception of a calcium signal \cite{bush2010nanoscale}.
 Despite  growing research efforts, the chemical reaction receiver is rarely accurately modeled and characterized in most of the  literature except  the works from Yilmaz \cite{Yilmaz2014effect,yilmaz2014three844,yilmaz2014simulation} and Chou \cite{chou2013extended}, since the local reactions  complicate the solution of the reaction-diffusion equations. 


Unlike   existing works on MC, we consider the reversible adsorption and desorption ({A$\&$D}) receiver, which is capable of adsorbing a certain type of information molecule near its surface, and desorbing the information molecules previously adsorbed at its surface. {A$\&$D}  is a widely-observed  process for   colloids \cite{FEDER1980144}, proteins \cite{ramsden1993concentration}, and polymers \cite{fang2005kinetics}. Also, the A$\&$D  process simplifies to the special case of an infinitely absorbing receiver. 
However, its modeling, analysis, and simulation  in the MC domain have never been investigated since the dynamic concentration change near the surface is more challenging than existing works with a passive receiver or an absorbing receiver. 

 From a theoretical perspective, researchers  have  derived the equilibrium concentration  of A$\&$D \cite{andrews2009accurate}, which is insufficient to model the time-varying channel impulse response (and ultimately the communications performance) of an {A$\&$D} receiver. 
From a simulation perspective,  the simulation design for the  A$\&$D process of molecules at the surface of a \emph{planar} receiver   was proposed   in \cite{andrews2009accurate}. However, the simulation procedure for the  \emph{MC communication system} with  a \emph{spherical} A$\&$D  receiver, where the information  molecules,  triggered by the transmission of multiple pulses,   propagate via free-diffusion through the channel, and contribute to the received signal through  A$\&$D at the surface of the receiver, has never been solved and reported. 
This is  due to the complexity in modeling the coupling effect of adsorption and desorption under diffusion, as well as accurately and dynamically tracking the location and the number of diffused molecules, adsorbed molecules and desorbed molecules.


Despite the aforementioned challenges, in this paper we  consider  the  diffusion-based MC system with a point transmitter and an  {A$\&$D} receiver.  The  goal of this paper is to characterize the impact of  the {A$\&$D}  receiver on the net number of newly-adsorbed molecules expected. Our major contributions are summarized
as follows.
\begin{enumerate}
\item We present an analytical model for the diffusion-based MC system with an {A$\&$D} receiver. We derive the exact expression for the channel impulse response at a spherical {A$\&$D} receiver  in a three dimensional (3D) fluid environment  due to a single release of multiple molecules (single transmission).
  We then derive the \emph{net} number of \emph{newly}-adsorbed molecules expected at the surface of the {A$\&$D} receiver in any time duration. 
\item We propose a simulation algorithm to simulate the diffusion, adsorption and desorption behavior of information molecules  based on a particle-based simulation framework.  Unlike  existing simulation platforms (e.g., Smoldyn \cite{andrews2010detailed}, NanoNS \cite{gul2010nanons}), our simulation algorithm captures the  dynamic process of  a MC system, which are the molecule   emission,  free diffusion, and  \text{A$\&$D} at the surface of the receiver.   Our simulation results are in close agreement with the derived number of  adsorbed molecules expected. 
\end{enumerate}

The rest of this paper is organized as follows. In Section II, we introduce the system model  . In Section III, we present the channel impulse response of information molecules.  In Section IV, we  present the simulation framework. In Section V, we discuss the numerical and simulation results. In Section VI, we conclude our contributions.

\section{System Model}
We consider a 3-dimensional (3D) diffusion-based MC system in a fluid environment with a point  transmitter and  a spherical A$\&$D  receiver. We assume  spherical symmetry where the transmitter is \emph{effectively} a spherical shell and the molecules are released from random points over the shell;  the actual angle to the transmitter when a molecule hits the receiver is ignored, so this assumption cannot accommodate a flowing environment.
 The point transmitter  is located at a distance $r_0$ from the center of the receiver and is at a distance $d=r_0-r_r$ from the nearest point on the surface of the receiver with radius $r_r$.   The extension to an asymmetric spherical model that accounts for the actual angle to the transmitter when a molecule hits the receiver complicates the derivation of the channel impulse response, and may be solved following \cite{scheider1972two}.  
 
 We assume all receptors are equivalent and can accommodate at most one adsorbed molecule.
 The ability of a molecule to adsorb at a given site is independent of the occupation of neighboring receptors.
The spherical receiver is assumed to have  no physical limitation on the number of molecules adsorbed to the receiver surface (i.e., we ignore saturation). This is an appropriate assumption for  a sufficiently low number of adsorbed molecules, or  for a sufficiently high concentration of receptors.
  We also assume  perfect synchronization between the transmitter and the receiver as in most literature \cite{Yilmaz2014effect,yilmaz2014simulation,yilmaz2014three844}. We consider three processes: emission, propagation, and reception,  which are detailed in the following.

\subsection{Emission}

The point transmitter releases one type of information molecule (e.g., hormones, pheromones, or deoxyribonucleic acid (DNA)) to the receiver for information transmission. The transmitter emits $N_{\rm{tx}}$ information  molecules at $t=0$, where we define the initial condition as \cite[3.61]{schulten2000lectures}
 \begin{align}
C\left( {r,\left. {t \to 0} \right|{r_0}} \right) = \frac{1}{{4\pi {r_0}^2}}\delta \left( {r - {r_0}} \right), \label{initial1}
\end{align}
where $C\left( {r,\left. {t \to 0} \right|{r_0}} \right)$  is the molecule distribution function at time ${t \to 0}$ and distance $r$ with initial distance $r_0$.

We also define the first boundary condition as
 \begin{align}
\mathop {\lim }\limits_{r \to \infty } C\left( {r,\left. t \right|{r_0}} \right) = 0, \label{boundary3}
\end{align}
such that  a molecule that diffuses extremely far away from the receiver is effectively removed from the fluid environment.

\subsection{Diffusion}

 Once the information molecules are emitted, they diffuse by randomly colliding with other molecules in the environment. This random motion is called Brownian motion \cite{howard1993radom630}.
The concentration of information molecules is assumed to be sufficiently low that the collisions between those information molecules are ignored \cite{howard1993radom630}, such that each information molecule diffuses independently with constant diffusion coefficient  $D$.
 The propagation model in a 3D environment is described by  Fick's second law \cite{howard1993radom630,yilmaz2014three844}:
\begin{align}
\frac{{\partial \left( {r \cdot C\left( {r,\left. t \right|{r_0}} \right)} \right)}}{{\partial t}} = D\frac{{{\partial ^2}\left( {r \cdot C\left( {r,\left. t \right|{r_0}} \right)} \right)}}{{\partial {r^2}}}, \label{ficklaw}
\end{align}
where  the diffusion coefficient 
is found experimentally  \cite{philip2008biological587}.

\subsection{Reception}
We consider the reversible A$\&$D receiver, which is capable of counting the net number of newly-adsorbed molecules at the surface of the receiver.  Any molecule that  hits the receiver surface is either adsorbed to the receiver surface  or reflected back into the fluid environment, based on the adsorption rate $k_1$ (length$\times $time$^{-1}$). The adsorbed molecules   either desorb  or remain stationary at the surface of receiver, based on the desorption rate $k_{-1}$ (time$^{-1}$). 

At $t=0$, there are no information molecules at the receiver surface, so the second initial condition is 
\begin{align}
C\left( {{r_r},\left. 0 \right|{r_0}} \right)=0, \text{and}\; {C_a}\left( {\left. 0 \right|{r_0}} \right) = 0, \label{initial2}
\end{align}
where ${C_a}\left( {\left. t \right|{r_0}} \right)$ is the average concentration of  molecules that are adsorbed to the receiver surface at time $t$.

 For the solid-fluid interface located at $r_r$, the second boundary condition of the information molecules is  \cite{andrews2009accurate}
\begin{align}
{\left. {D\frac{{\partial \left( {C\left( {r,\left. t \right|{r_0}} \right)} \right)}}{{\partial r}}} \right|_{r = r_r^ + }} = {k_1}C\left( {{r_r},\left. t \right|{r_0}} \right) - {k_{ - 1}}{C_a}\left( {\left. t \right|{r_0}} \right), \label{boundary1}
\end{align}	
where  $k_1$ and $k_{-1}$ are  non-zero finite constants.  Here, the adsorption rate $k_1$ is approximately limited to the thermal velocity of potential adsorbents (e.g., $k_1 < 7 \times 10^6$ $\rm{\mu m /s}$ for a 50 kDa protein at 37 $^\circ$C) \cite{andrews2009accurate}; the desorption rate $k_{-1}$  is typically from $10^{-4}$ $s^{-1} $ and  $10^{4}$ $s^{-1}$ \cite{Tom2007Multistage}.

The surface concentration ${C_a}\left( {\left. t \right|{r_0}} \right)$ changes over time as follows:
\begin{align}
\frac{{\partial {C_a}\left( {\left. t \right|{r_0}} \right)}}{{\partial t}} = {\left. {D\frac{{\partial \left( {C\left( {r,\left. t \right|{r_0}} \right)} \right)}}{{\partial r}}} \right|_{r = r_r^ + }}, \label{boundary2}
\end{align}
which shows that the change  in the adsorbed concentration over time is equal to the flux of diffusion molecules towards the surface. 

Combining \eqref{boundary1} and  \eqref{boundary2}, we write
\begin{align}
\frac{{\partial {C_a}\left( {\left. t \right|{r_0}} \right)}}{{\partial t}} = {k_1}C\left( {{r_r},\left. t \right|{r_0}} \right) - {k_{ - 1}}{C_a}\left( {\left. t \right|{r_0}} \right), \label{boundary3}
\end{align}
which is known as the Robin or radiation boundary condition, and shows that the equivalent adsorption rate is proportional to the molecule concentration at the surface.

\section{Receiver Observations }
In this section, we first derive the spherically-symmetric spatial distribution  ${C\left( {r,\left. t \right|{r_0}} \right)}
$, which is the probability of finding a molecule at distance $r$ and time $t$. We then derive the  flux at  the surface of the A$\&$D receiver, from which we  derive the exact   number of adsorbed molecules expected at the surface of the receiver.
 In the following theorem, we solve the time-varying spatial distribution of information molecules at the surface of the receiver.

\begin{theorem}
The expected time-varying spatial distribution of an information molecule released into a 3D fluid environment with a reversible adsorbing receiver is given by
\begin{align}
C\left( {r,\left. t \right|{r_0}} \right) =  & \;  \frac{1}{{4\pi {r_0}r\sqrt {4\pi Dt} }}\exp \left\{ { - \frac{{{{\left( {r - {r_0}} \right)}^2}}}{{4Dt}}} \right\}
\nonumber\\&  + \frac{1}{{4\pi {r_0}r\sqrt {4\pi Dt} }}\exp \left\{ { - \frac{{{{\left( {r + {r_0} - 2{r_r}} \right)}^2}}}{{4Dt}}} \right\}
\nonumber\\&  - \frac{1}{{2\pi r}}\int_0^\infty  {\left( {{e^{ - jwt}}\varphi _Z^*\left( w \right) + {e^{jwt}}{\varphi _Z}\left( w \right)} \right)dw},
\label{der20}
\end{align}
where 
 \begin{align}
{\varphi _Z}\left( w \right) = & \; Z\left( {jw} \right) = \frac{{2\left( {\frac{1}{{{r_r}}} + \frac{{{k_1}jw}}{{D\left( {jw + {k_{-1}}} \right)}}} \right)}}{{\left( {\frac{1}{{{r_r}}} + \frac{{{k_1}jw}}{{D\left( {jw + {k_{-1}}} \right)}} + \sqrt {\frac{{jw}}{D}} } \right)}}
\nonumber\\& \times \frac{1}{{4\pi {r_0}\sqrt {4Djw} }}\exp \left\{ { - \left( {r + {r_0} - 2{r_r}} \right)\sqrt {\frac{{jw}}{D}} } \right\}.
\label{der19}
\end{align}
and $\varphi _Z^*\left( w \right)$ is the complex conjugate of $\varphi _Z\left( w \right)$.
\end{theorem}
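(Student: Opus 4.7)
My plan is to reduce the spherically symmetric diffusion equation \eqref{ficklaw} to a one-dimensional heat equation via the substitution $u(r,t):=r\,C(r,t\mid r_0)$, solve the resulting boundary-value problem in the Laplace domain, and invert the result term by term. Under this substitution Fick's law becomes $\partial_t u = D\,\partial_r^2 u$ on $[r_r,\infty)$, the far-field condition becomes $u/r\to 0$, and the delta initial condition becomes $u(r,0)=(4\pi r_0)^{-1}\delta(r-r_0)$. The reception kinetics \eqref{boundary1}--\eqref{boundary3} couple $C(r_r,t)$ to $C_a(t)$, but Laplace transforming the surface ODE with $C_a(0\mid r_0)=0$ yields $\widetilde{C_a}(s)=k_1\widetilde u(r_r,s)/[r_r(s+k_{-1})]$, which I would substitute into \eqref{boundary1} to collapse the coupled system into a single pure Robin condition
\begin{align}
\widetilde u_r(r_r,s) = \beta(s)\,\widetilde u(r_r,s),\quad \beta(s):=\tfrac{1}{r_r}+\tfrac{k_1 s}{D(s+k_{-1})}.
\end{align}

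Next I would solve the transformed ODE $\widetilde u_{rr}-(s/D)\widetilde u = -(4\pi r_0 D)^{-1}\delta(r-r_0)$ on $[r_r,\infty)$ by superposing the free-space propagator $\frac{1}{8\pi r_0\sqrt{Ds}}\,e^{-\alpha(s)|r-r_0|}$, with $\alpha(s):=\sqrt{s/D}$, and the bounded homogeneous piece $A(s)\,e^{-\alpha(s)r}$. Matching the Robin condition at $r_r$ fixes $A(s)e^{-\alpha r}=\frac{\alpha-\beta}{\alpha+\beta}\cdot\frac{e^{-\alpha(r+r_0-2r_r)}}{8\pi r_0\sqrt{Ds}}$, and the algebraic identity $\frac{\alpha-\beta}{\alpha+\beta}=1-\frac{2\beta}{\alpha+\beta}$ then splits $\widetilde u$ into three pieces: a direct source, a reflecting-image source at $2r_r-r_0$, and a desorption correction with the $s$-dependent factor $-\tfrac{2\beta(s)}{\alpha(s)+\beta(s)}$.

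Finally I would invert each piece. The first two terms match the elementary transform $\mathcal{L}^{-1}\{(Ds)^{-1/2}e^{-a\sqrt{s/D}}\}=(\pi Dt)^{-1/2}e^{-a^2/(4Dt)}$ and, after dividing by $r$, reproduce the two Gaussian kernels in \eqref{der20}. The correction piece has no elementary inverse because $\beta(s)$ is rational with a pole at $s=-k_{-1}$ and $\alpha(s)$ carries a square-root branch cut along the negative real axis; I would evaluate its inverse through the Bromwich integral, deform the contour onto the imaginary axis $s=jw$, and then exploit the Hermitian symmetry $H(-jw)=H^*(jw)$ to fold the two half-lines into the single real-valued integral $\frac{1}{2\pi r}\int_0^\infty[e^{-jwt}\varphi_Z^*(w)+e^{jwt}\varphi_Z(w)]\,dw$, matching \eqref{der19}.

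The main obstacle will be justifying this contour collapse: one must verify that the correction integrand decays on large semicircular arcs in $\mathrm{Re}(s)>0$ and that $\alpha(s)+\beta(s)$ has no zeros there, since any such zero would contribute a residue corresponding to an unphysical exponentially growing mode. The branch cut of $\alpha(s)$ and the desorption pole at $s=-k_{-1}$ both sit in the left half-plane and must be kept outside the deformed contour; this is standard but requires a careful root count using the explicit kinetic form of $\beta(s)$ rather than a black-box appeal to analyticity.
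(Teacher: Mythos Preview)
Your proposal is correct and follows essentially the same route as the paper: the substitution $u=rC$, Laplace transforming to a one-dimensional Helmholtz problem with the Robin parameter $\beta(s)=r_r^{-1}+k_1 s/[D(s+k_{-1})]$, the decomposition into free-space, image, and $2\beta/(\alpha+\beta)$ correction terms, and the elementary inversion of the two Gaussian pieces all match the paper's Appendix~A. The only cosmetic differences are that the paper first isolates the free-space part $g$ via a Fourier transform before passing to Laplace (you work entirely in Laplace, which is cleaner), and for the correction term the paper invokes the Gil--Pelaez inversion formula and then differentiates, whereas you collapse the Bromwich contour onto the imaginary axis directly; these are equivalent maneuvers, and your explicit attention to the arc decay and the absence of right-half-plane zeros of $\alpha+\beta$ is a rigor point the paper leaves implicit.
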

\begin{proof}
See Appendix A.
\end{proof}

We observe that \eqref{der20} reduces to the \emph{absorbing} receiver \cite[Eq. (3.99)]{schulten2000lectures} when there is no desorption (i.e., $k_{-1}=0$).

To characterize the number of information molecules adsorbed at the surface of the receiver using ${ C\left( {\left. {r,t} \right|{r_0}} \right)}$,
we define the  rate of the coupled reaction (i.e.,  adsorption and desorption)  at the surface of the reversible adsorbing receiver   as \cite[Eq. (3.106)]{schulten2000lectures}
\begin{align}
K\left( {\left. t \right|{r_0}} \right) = 4\pi r_r^2D{\left. {\frac{{\partial C\left( {\left. {r,t} \right|{r_0}} \right)}}{{\partial r}}} \right|_{r = {r_r}}}.
\label{der21}
\end{align}

\begin{cor}
The rate of the coupling reaction at  the surface of a reversible adsorbing receiver is given by
\begin{align}
K\left( {\left. t \right|{r_0}} \right) = & \; 2{r_r}D{\int_{ 0 }^\infty  {{e^{ - jwt}}\left[ {\sqrt {\frac{{jw}}{D}} {\varphi _Z}\left( w \right)} \right]} ^*}dw
\nonumber\\& + 2{r_r}D{\int_{ 0 }^\infty  {{e^{  jwt}}\left[ {\sqrt {\frac{{jw}}{D}} {\varphi _Z}\left( w \right)} \right]} }dw,
\label{der22}
\end{align}
where ${\varphi _Z}\left( w \right)$ is  as given in \eqref{der19}. 
\end{cor}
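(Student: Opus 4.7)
The plan is to substitute \eqref{der20} directly into the definition \eqref{der21} of $K(t|r_0)$, differentiate termwise under the integral sign, and collect contributions at $r = r_r$. The computation decomposes naturally into three pieces corresponding to the two Gaussian terms and the integral term of Theorem 1.

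For the Gaussian terms, each has the structure $\tfrac{1}{r}$ times a Gaussian in $r$, so the product rule yields a $-1/r^2$ factor and an exponential-gradient factor proportional to $-(r \pm r_0 \mp 2r_r)/(2Dt)$. At $r = r_r$ the two Gaussians coincide in value, and the two exponent-gradient factors (which are linear in $r$ with opposite signs) cancel, leaving only the combined $-1/r_r^2$ boundary contribution. For the integral term $-\tfrac{1}{2\pi r}\int_0^\infty[\cdots]\,dw$, differentiating the $1/r$ prefactor gives a $+\tfrac{1}{2\pi r^2}$ piece multiplying the original integral, while inside the integrand only the factor $\exp\{-(r + r_0 - 2r_r)\sqrt{jw/D}\}$ appearing in $\varphi_Z(w)$ depends on $r$. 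Differentiating under the integral sign brings down $-\sqrt{jw/D}$ for the $\varphi_Z$ factor and its complex conjugate $-(\sqrt{jw/D})^*$ for the $\varphi_Z^*$ factor. After the two sign reversals (from the outer $-\tfrac{1}{2\pi r}$ and from $\partial_r$ acting on the exponential) and multiplication by the outer $4\pi r_r^2 D$, the net prefactor becomes $2 r_r D$ at $r = r_r$, reproducing exactly the two integrals in \eqref{der22}.

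The remaining contributions at $r = r_r$---the combined $-1/r_r^2$ Gaussian-boundary value and the $+\tfrac{1}{2\pi r_r^2}$ piece from the integral's prefactor derivative, both multiplied by $4\pi r_r^2 D$---must cancel for the computation to reduce to the claimed form. I expect this cancellation to be the main obstacle. The cleanest route I see is to invoke the Robin boundary condition \eqref{boundary1} that was imposed in the proof of Theorem 1 to determine $\varphi_Z$: that same condition algebraically forces these two residuals to sum to zero at the surface. An alternative, perhaps more transparent route is to carry out the differentiation entirely in the Laplace domain---transforming \eqref{der20} termwise so that each exponential $\exp\{-a\sqrt{jw/D}\}$ differentiates cleanly to $-\sqrt{s/D}$ times itself, computing $\partial\hat{C}/\partial r|_{r = r_r}$ algebraically, and then inverting back to the time domain---which sidesteps the explicit Gaussian/integral decomposition and makes the cancellation manifest as a single algebraic simplification.
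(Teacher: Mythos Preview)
Your plan is exactly the paper's: substitute \eqref{der20} into \eqref{der21} and differentiate. You have, however, gone further than the paper's one-line proof and correctly isolated the issue: after the product rule on the $1/r$ prefactors, there are residual terms beyond the two integrals displayed in \eqref{der22}, and those residuals must vanish for \eqref{der22} to hold as written. Your proposed mechanism for the cancellation is where the gap lies.

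Summing the two residuals you name gives, in the Laplace domain, precisely $-\tfrac{1}{r_r^2}\bigl[r\tilde C(r,s|r_0)\bigr]_{r=r_r}$; multiplying by $4\pi r_r^2 D$ yields $-4\pi r_r D\,\tilde C(r_r,s|r_0)$. This vanishes if and only if $C(r_r,t|r_0)=0$, i.e.\ for a \emph{Dirichlet} (fully absorbing) boundary. The Robin condition \eqref{boundary1} relates the surface flux to the surface concentration but does \emph{not} annihilate it, so the cancellation you invoke does not occur. A direct Laplace-domain check confirms this: with $\alpha=\tfrac{1}{r_r}+\tfrac{k_1 s}{D(s+k_{-1})}$, $\beta=\sqrt{s/D}$, and $\gamma=\alpha-\tfrac{1}{r_r}$, the true transform computed from \eqref{der21} (or equivalently from the Robin relation $D\partial_r\tilde C|_{r_r}=D\gamma\,\tilde C(r_r,s|r_0)$) is
\[
\tilde K(s|r_0)=\frac{r_r\,\gamma}{r_0(\alpha+\beta)}\,e^{-(r_0-r_r)\beta},
\]
whereas the expression \eqref{der22} corresponds to $4\pi r_r D\,\beta\,Z(s)|_{r=r_r}=\dfrac{r_r\,\alpha}{r_0(\alpha+\beta)}\,e^{-(r_0-r_r)\beta}$. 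These differ by $-\dfrac{1}{r_0(\alpha+\beta)}e^{-(r_0-r_r)\beta}$, which is exactly the uncancelled residual $-4\pi r_r D\,\tilde C(r_r,s|r_0)$. So neither the Robin condition nor the alternative Laplace-domain route you sketch will make the extra piece disappear; the formula \eqref{der22} as stated is missing that term, and your proof cannot close without either restoring it or restricting to the fully absorbing limit $k_1\to\infty$ in which $C(r_r,\cdot)=0$.
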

\begin{proof}
By substituting \eqref{der20} into \eqref{der21}, we derive the  coupling reaction rate at the surface of an A$\&$D receiver as \eqref{der22}.
\end{proof}

From \textbf{Corollary 1}, we can derive the  net change in the number of adsorbed molecules expected for any time interval in the following theorem.

\begin{theorem}
The  net change in the number of  adsorbed molecules expected  at the surface of the receiver during the interval [$T$, $T$+$T_{s} $]  is derived as 
\begin{align}
& \mathbb{E}\left[ {{N_{\rm{A\&D}}}\left( {\left. {{\Omega _{{r_r}}},T,T + T_{s}} \right|{r_0}} \right)} \right] = 2{r_r}{N_{\rm{tx}}}D \nonumber\\
  &     \hspace{0.3cm} \times \Bigg[ {\int_{ 0 }^\infty  \frac{{{e^{ - jwT}} - {e^{ - jw\left( {T + T_{s}} \right)}}}}{{jw}}\Big[ {\sqrt {\frac{{jw}}{D}} {\varphi _Z}\left( w \right)} \Big] ^*}dw \Bigg.
  \nonumber\\
  &     \hspace{0.3cm} \Bigg.+  {\int_{ 0 }^\infty  {\frac{{{e^{  jw\left( {T + T_{s}} \right)}}-{e^{  jwT}} }}{{jw}}\Big[ {\sqrt {\frac{{jw}}{D}} {\varphi _Z}\left( w \right)} \Big]} }dw \Bigg]
  ,
\label{der241}
\end{align}
where ${\varphi _Z}\left( w \right)$ is given in \eqref{der19},  $T_{s}$ is the sampling time, and  ${\Omega _{{r_r}}}$ represents the spherical receiver with radius $r_r$.
\end{theorem}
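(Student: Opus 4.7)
The plan is to obtain the expected net change in the number of adsorbed molecules by time-integrating the coupling reaction rate from Corollary 1 over the observation window $[T, T+T_{s}]$, and then scaling by the number of released molecules $N_{\rm{tx}}$. The justification starts from equation \eqref{boundary2}: since $\partial C_a/\partial t = D \,\partial C/\partial r|_{r=r_r^+}$, multiplying by the surface area $4\pi r_r^2$ shows that $K(t|r_0)$ is precisely the instantaneous net adsorption rate (adsorption minus desorption) at the receiver for a single emission described by the initial condition \eqref{initial1}. Therefore the probability that a single molecule is net-adsorbed during $[T, T+T_{s}]$ is $\int_T^{T+T_{s}} K(t|r_0)\,dt$, and multiplying by $N_{\rm{tx}}$ yields $\mathbb{E}[N_{\rm{A\&D}}(\Omega_{r_r}, T, T+T_{s}|r_0)]$.

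Next I would substitute the closed-form expression for $K(t|r_0)$ from \eqref{der22} and swap the order of the time integration and the frequency integration. The inner time integrals reduce to elementary exponentials:
\begin{align}
\int_T^{T+T_{s}} e^{-jwt}\,dt &= \frac{e^{-jwT} - e^{-jw(T+T_{s})}}{jw}, \nonumber \\
\int_T^{T+T_{s}} e^{jwt}\,dt &= \frac{e^{jw(T+T_{s})} - e^{jwT}}{jw}. \nonumber
\end{align}
Plugging these back in produces exactly \eqref{der241}. The rest is bookkeeping: keeping the complex conjugate on the first bracketed term, preserving the $2 r_r D$ prefactor from \eqref{der22}, and carrying through $N_{\rm{tx}}$.

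The one step that needs care is the interchange of the two integrals. I would justify it by noting that $\varphi_Z(w)$ decays in $|w|$ thanks to the factor $1/\sqrt{jw}$ in \eqref{der19} together with the exponential $\exp\{-(r+r_0-2r_r)\sqrt{jw/D}\}$ evaluated at $r=r_r$, which gives sufficient decay for $\sqrt{jw/D}\,\varphi_Z(w)$ so that Fubini applies on the compact time interval $[T, T+T_{s}]$. The apparent singularity $1/(jw)$ at $w=0$ is removable because the numerators $e^{-jwT} - e^{-jw(T+T_{s})}$ and $e^{jw(T+T_{s})} - e^{jwT}$ both vanish linearly in $w$ as $w\to 0$.

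The main obstacle I anticipate is not algebraic but interpretive: confirming that $K(t|r_0)$ from Corollary 1 genuinely represents the \emph{net} rate (so that desorbed molecules are subtracted), rather than the gross adsorption rate. This follows from the Robin boundary condition \eqref{boundary3}, since the flux $D\,\partial C/\partial r|_{r_r^+}$ already equals $k_1 C(r_r,t|r_0) - k_{-1} C_a(t|r_0)$; hence integrating $K$ automatically accounts for the desorbed contribution, and no separate desorption bookkeeping is required.
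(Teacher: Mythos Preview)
Your proposal is correct and follows essentially the same route as the paper: define the expected net change as $N_{\rm{tx}}\int_T^{T+T_s}K(t|r_0)\,dt$ (the paper phrases this as a difference of cumulative fractions $R_{\rm{A\&D}}$, which is equivalent), substitute the expression \eqref{der22} from Corollary~1, and integrate term-by-term in $t$. Your added remarks on the Fubini interchange and on why $K(t|r_0)$ is the \emph{net} rate via \eqref{boundary2}--\eqref{boundary3} are not in the paper's proof but are welcome justifications.
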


\begin{proof}
The cumulative fraction of particles that are adsorbed at the surface of the  receiver at time $T$  is expressed as
\begin{align}
{{R_{\rm{A\&D}}}\left( {\left. {{\Omega _{{r_r}}},T} \right|{r_0}} \right)}&  = \int_0^T {K\left( {\left. t \right|{r_0}} \right)dt}.
\label{der23}
\end{align}

Based on \eqref{der23}, the  net change of adsorbed molecules expected at the surface of the receiver during the interval [$T$, $T$+$T_{s} $] is defined as
\begin{align}
\mathbb{E}&\left[ {{N_{{\rm{A\&D}}}}\left( {\left. {{\Omega _{{r_r}}},T,T + T_{s}} \right|{r_0}} \right)} \right] = 
\nonumber\\& {N_{\rm{tx}}}{R_{\rm{A\&D}}}\left( {\left. {{\Omega _{{r_r}}},T + T_{s}} \right|{r_0}} \right)   - {N_{\rm{tx}}}{R_{\rm{A\&D}}}\left( {\left. {{\Omega _{{r_r}}},T} \right|{r_0}} \right).
\label{der24}
\end{align}

Substituting \eqref{der23} into \eqref{der24}, we derive the expected net change of adsorbed molecules during any observation interval as \eqref{der241}.
\end{proof}

\section{Simulation Framework}
This section describes the stochastic simulation framework of the point-to-point MC system with the A$\&$D receiver described by \eqref{boundary1}. 
To   accurately capture the locations of individual information molecules, we adopt a particle-based simulation framework with a spatial resolution on the  scale of several nanometers \cite{andrews2009accurate}.

\subsection{Algorithm}
We  present the algorithm for simulating the MC system with an  A$\&$D receiver in Algorithm 1. 
In the following subsections, we describe the details of Algorithm 1.
\begin{algorithm}
\caption{Simulation of a MC System with an A$\&$D Receiver}\label{euclid}
Require: $N_{\rm{tx}}$, $r_0$, $r_r$,  $ {\Omega _{{r_r}}}$, $D$, $\Delta t$, $T_s$, $T_b$, $N_{\rm{th}}$
\begin{algorithmic}[1]
\Procedure{Initialization}{}
\State Determine Simulation End Time
\State  Add $N_{\text{tx}}$ emitted molecules 
\BState  \textbf{For all} Simulation Time Step \textbf{do} 
\BState \,  \textbf{For all} free molecules in environment \textbf{do} 
\State  Propagate  free molecules following  $\mathcal{N}\left( {0,2D\Delta t} \right)$
\State  Evaluate  distance $d_m$ of  molecule to receiver  
   \If {$d_m< r_r$}
\State     Update  state $\&$ location of  collided molecule 
\State   Update $\#$ of collided  molecules  $N_C$
\EndIf
\BState \,   \textbf{For all} $N_C$ collided molecules  \textbf{do}
 \If {Adsorption Occurs}
\State  Update $\#$ of \emph{newly}-adsorbed molecules  $N_A$
\State  Calculate adsorbed molecule  location 
\State \, \, $\left( {x_m^A,y_m^A,z_m^A} \right)$
\Else  
\State Reflect the molecule off receiver surface to
\State \, \, $\left( {x_m^{Bo},y_m^{Bo},z_m^{Bo}} \right)$
\EndIf
\BState  \, \textbf{For all} \emph{previously}-adsorbed molecules \textbf{do} 
\If {Desorption Occurs}
\State    Update  state $\&$ location of  desorbed molecule 
\State  Update $\#$  of \emph{newly}-desorbed molecules  $N_D$
\State  Displace \emph{newly}-desorbed molecule to 
\State $\left( {x_m^D,y_m^D,z_m^D} \right)$
\EndIf
\BState \, Calculate \emph{net}  number of \emph{newly}-adsorbed molecules, 
\BState \, which is   $N_A-N_D$ 
\EndProcedure
\end{algorithmic}
\end{algorithm}

\subsection{ Emission and Diffusion}
 At  time $t=0$,   $N_{\rm{tx}}$ molecules are emitted from the point transmitter at a distance $r_0$ from the center of the receiver. 
The time is divided into small simulation intervals of size $\Delta t$, and each time instant is represented by $t_m=m\Delta t$, where $m$ is the current simulation index. 
The displacement  $\Delta S$ of a molecule in a 3D fluid environment in one simulation step $\Delta t$ is modeled as 
\begin{align}
 \Delta S = \left\{ \mathcal{N}\left({0,2D\Delta t}\right), \; \mathcal{N}\left({0,2D\Delta t}\right), \;\mathcal{N}\left({0,2D\Delta t}\right) \right\},
 \end{align}
where  ${\mathcal{N}\left( {0,2D\Delta t} \right)}$ is the normal distribution.
In each simulation step, the number of molecules and their locations are stored.
\subsection{Adsorption or Reflection}
 According to the second boundary condition in \eqref{boundary2},  molecules that collide with the receiver surface are either adsorbed or reflected back. The $N_C$ collided  molecules    are identified  by  calculating the distance between each molecule and the center of the receiver. Among the collided molecules, the probability of a molecule being adsorbed to the receiver surface, i.e., the adsorption probability, is a function of the diffusion coefficient, which is given as
 \cite[Eq. (10)]{erban2007reactive}
\begin{align}
{P_A} = {k_1}\sqrt {\frac{{\pi \Delta t}}{D}} . \label{abpro}
 \end{align}
 
The probability  that a collided molecule bounces off of the receiver is $1-{P_A}$.

 It is known that  adsorption may occur during the simulation step $\Delta t$, and  determining exactly where a molecule  adsorbed to the surface of the receiver during $\Delta t$ is a non-trivial problem. To simplify this, we assume that the adsorbed location of a  molecule during $[t_{m-1}, 
 t_m]$ is equal to the location where the line, formed by this molecule's location at the start of the current simulation step $\left( {{x_{m - 1}},{y_{m - 1}},{z_{m - 1}}} \right)$ and this molecule's location  at the end of the current simulation step after diffusion $\left( {{x_{m}},{y_{m}},{z_{m}}} \right)$, intersects the surface of the receiver. Assuming that the location of the center of receiver is $(x_r,y_r,z_r)$, then the location of the intersection point between this 3D line segment,
 and a sphere with center at  $(x_r,y_r,z_r)$ in the $m$th simulation step, can be shown to be
 \begin{align}
x_m^A = & {x_{m - 1}} + \frac{{{x_m} - {x_{m - 1}}}}{{\Delta }}g,\label{loc1}
\\ y_m^A = & {y_{m - 1}} + \frac{{{y_m} - {y_{m - 1}}}}{{\Delta }}g, \label{loc11}
\end{align}
\begin{align}
z_m^A = & {z_{m - 1}} + \frac{{{z_m} - {z_{m - 1}}}}{{\Delta }}g, \label{loc12}
 \end{align}
where 
\begin{align}
\Delta = \sqrt {{{\left( {{x_m} - {x_{m - 1}}} \right)}^2} + {{\left( {{y_m} - {y_{m - 1}}} \right)}^2} + {{\left( {{z_m} - {z_{m - 1}}} \right)}^2}} , \label{loc_delta}
\end{align}
\begin{align}
g = \frac{{ - b - \sqrt {{b^2} - 4ac} }}{{2a}}. \label{loc2}
\end{align}

In \eqref{loc2}, we have 
\begin{align}
a = &{\left( {\frac{{{x_m} - {x_{m - 1}}}}{{\Delta }}} \right)^2} + {\left( {\frac{{{y_m} - {y_{m - 1}}}}{{\Delta }}} \right)^2} + {\left( {\frac{{{z_m} - {z_{m - 1}}}}{{\Delta }}} \right)^2},\nonumber\\
b = &2\frac{{\left( {{x_m} - {x_{m - 1}}} \right) {({x_{m - 1}}-x_r)} }}{{\Delta }}  + 2\frac{{\left( {{y_m} - {y_{m - 1}}} \right)}{{(y_{m - 1}-y_r)} }}{{\Delta }}
\nonumber\\&    + 2\frac{{\left( {{z_m} - {z_{m - 1}}} \right) {({z_{m - 1}}-z_r) } }}{{\Delta }},\label{loc31}
\end{align}
\begin{align}
c = & { {({x_{m - 1}}-x_r)}^2} + { {({y_{m - 1}}-y_r)}^2} + {{({z_{m - 1}}-z_r)}^2}-{r_r}^2,
 \label{loc3}
\end{align}
where $\Delta$ is given in \eqref{loc_delta}.

 Of course, due to symmetry, the location of the adsorption site does not impact the overall accuracy of the  simulation.

If a molecule fails to adsorb to the receiver, then in the reflection process we make the approximation that the molecule bounces back to its position at the start of the current  simulation step.
 Thus, the location of the molecule after reflection by the receiver in the $m$th simulation step is approximated as
\begin{align}
\left( {x_m^{Bo},y_m^{Bo},z_m^{Bo}} \right) = \left( {{x_{m - 1}},{y_{m - 1}},{z_{m - 1}}} \right). \label{loc4}
\end{align}  

Note that the approximations for molecule locations in the adsorption process and the reflection process can be  accurate for  sufficiently small simulation steps (e.g., $\Delta t < 10^{-7}$  s for the system that we simulate in Section V), but   small simulation steps result in poor computational efficiency.  

\subsection{Desorption}
In the desorption process, the molecules adsorbed at the receiver boundary either desorb or remain adsorbed.  The desorption process can be modeled as a  first-order chemical reaction. 
 Thus,  the desorption probability of a molecule at the receiver surface  during  $\Delta t$ is given by  \cite[Eq. (22)]{andrews2009accurate}
\begin{align}
{P_D} = 1 - {e^{ - {k_{ - 1}}\Delta t}}.\label{loc5}
\end{align}

\begin{figure}[t!]
    \begin{center}
        \includegraphics[width=3.0 in,height=2.3in]{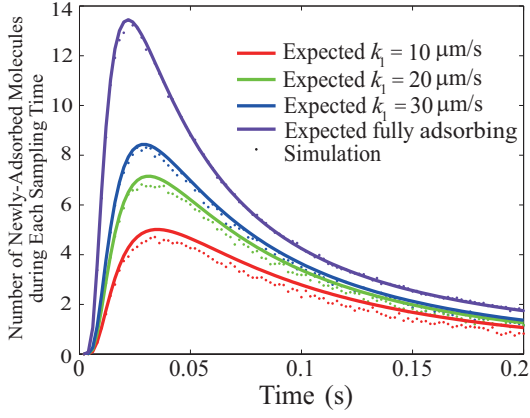}
        \vspace{-0.3cm}
        \caption{The net number of newly-adsorbed  molecules for various adsorption rates with parameters:
        $k_{-1} = 5$ $s^{-1} $, $N_{\rm{tx}} = 1000$, $r_r = 10$ $\rm{\mu m}$, $d = 1 \; \rm{\mu m} $, $D = 8$ $\rm{\mu m^2/s} $,   $T_s =0.002$ s, and the simulation step $\Delta t =10^{-5}$ s.}
        \label{fig:1}
    \end{center}
\end{figure}

The displacement of a  molecule after desorption is an important factor for accurate modeling of  molecule behaviour.  If the simulation step were   small, then we might place the desorbed molecule near the receiver surface; otherwise, doing so may result in an artificially higher chance of re-adsorption in the following time step, resulting in an inexact  concentration profile. To avoid this, we take into account the diffusion \emph{after} desorption, and place the desorbed molecule away from the surface with displacement $\left( {\Delta x,\Delta y,\Delta z} \right)$ 
\begin{align}
\left( {\Delta x,\Delta y,\Delta z} \right) = \left( {f\left( {{P_1}} \right),f\left( {{P_2}} \right),f\left( {{P_3}} \right)} \right), \label{loc6}
\end{align} 
where  each component was empirically found to be \cite[Eq. (27)]{andrews2009accurate}
\begin{align}
f\left( P \right) = \sqrt {2D\Delta t} \frac{{0.571825P - 0.552246{P^2}}}{{1 - 1.53908P + 0.546424{P^2}}}. \label{loc7}
\end{align} 

In \eqref{loc6}, $P_1$, $P_2$ and $P_3$ are  uniform random numbers between 0 and 1. Placing the desorbed molecule at a random distance away along the line from the center of the receiver to where the molecule was adsorbed  is not sufficiently accurate due to the lack of consideration for the coupling effect of  A$\&$D  and the diffusion coefficient  in \eqref{loc7}. 

Different from \cite{andrews2009accurate}, we have the spherical receiver  such that the   molecule
after desorption in our model should be  displaced differently.
We assume that the location of a molecule after desorption $\left( {x_m^D,y_m^D,z_m^D} \right)$, based on its location at the start of the current simulation step and the location of the center of the receiver $(x_r,y_r,z_r)$, can be approximated as
\begin{align}
x_m^D = & {x_{m - 1}^A} + {\rm{sgn}}\left( {{x_{m - 1}^A} - {x_r}} \right)\Delta x,
\nonumber\\ y_m^D = & {y_{m - 1}^A} + {\rm{sgn}}\left( {{y_{m - 1}^A} - {y_r}} \right)\Delta y,
\nonumber\\ z_m^D = & {z_{m - 1}^A} + {\rm{sgn}}\left( {{z_{m - 1}^A} - {z_r}} \right) \Delta z.
\label{loc8}
\end{align}
In \eqref{loc8}, $\Delta x$, $\Delta y$, and $\Delta z$ are given in \eqref{loc6}, and $\rm{sgn}\left(  \cdot  \right)$ is the Sign function.

\subsection{ Reception}
The receiver is capable of counting the net change in the number of adsorbed molecules in each simulation step.

\section{Numerical Results}

Fig. \ref{fig:1} and Fig. \ref{fig:2} plot the  net change of  adsorbed molecules   at the surface of the A$\&$D receiver at each sampling time $T_s$ due to a single bit transmission. The expected analytical curves are plotted using the  exact result in \eqref{der241}. The simulation points are plotted by measuring the net change of  adsorbed molecules  during $ [t,t + T_s]$   using  Algorithm 1 described in  Section IV, where $t = nT_s$, and $n \in \{1,2,3,\ldots \} $. In both figures, we  average  the   number of newly-adsorbed molecules expected over 1000 independent emissions of $N_{\rm{tx}}$ information molecules.  We see that the expected number of newly-adsorbed molecules measured using simulation is close to the exact analytical curves. Note that the small gap between the curves results from the local approximations in the adsorption, reflection, and desorption processes in \eqref{abpro}-\eqref{loc12}, \eqref{loc4}, and \eqref{loc8}, which can be reduced by setting smaller simulation step.

\begin{figure}[t!]
    \begin{center}
        \includegraphics[width=3.0 in,height=2.3in]{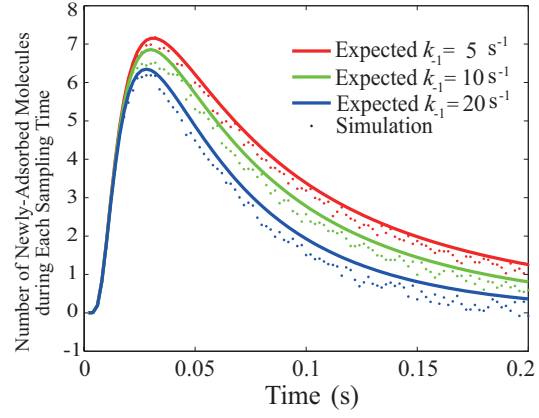}
        \vspace{-0.3cm}
        \caption{The net number of newly-adsorbed information molecules for various desorption rates with parameters:
        $k_1 = 20$ $\rm{\mu m /s}$, $N_{\rm{tx}} = 1000$, $r_r = 10 \; \rm{\mu m}$, $d = 1 \; \rm{\mu m} $, $D = 8$ $\rm{\mu m^2/s} $, $\Delta t =10^{-4}$ s,  and $T_s =0.002$ s. }
        \label{fig:2}
    \end{center}
\end{figure}

Fig. \ref{fig:1} examines the impact of the adsorption rate on the net number of newly-adsorbed  molecules expected at the surface of the receiver.  We fix the desorption rate to be $k_{-1} = 5\;$$\rm{s^{-1}}$. The number of newly-adsorbed  molecules  expected increases with increasing  adsorption rate $k_1$,  as predicted by \eqref{boundary1}.  Compared with the full adsorption receiver (e.i., $k_1$ $=\infty$), the A$\&$D receiver has a weaker observed signal. 
 \mbox{Fig. \ref{fig:2}}  shows the impact of the desorption rate on the  number of newly-adsorbed molecules expected at the surface of the receiver. We set $k_1 = 20$ $\mu \rm{m /s}$. The number of newly-adsorbed  molecules expected  decreases with increasing  desorption rate $k_{-1}$, which is as predicted by \eqref{boundary1}.
 
At the receiver side, the  number of newly-adsorbed molecules during each symbol interval could be compared with a threshold to demodulate the signal.   From the communication point of view,  
Fig. \ref{fig:1}   shows that 
 the   higher adsorption rate makes the received signal more distinguishable. 
In Fig.\ref{fig:1} and Fig. \ref{fig:2},  the shorter tail due to the lower adsorption rate and the higher desorption rate corresponds to less intersymbol interference.

\section{Conclusion}

In this paper, we modeled the diffusion-based MC system with the A$\&$D receiver. We derived the exact expression for the net number of newly-adsorbed information molecules expected at the surface of the receiver. We also presented a simulation algorithm that captures the behavior of each information molecule with the stochastic reversible reaction at the receiver. 
 We revealed that the number of  newly-adsorbed information molecules expected at the surface of the receiver increases with increasing  adsorption rate and with decreasing  desorption rate. Our ongoing work is comparing our proposed model with existing receiver models and considering the impact on bit error performance. Our analytical model and simulation framework provide a foundation for the accurate design and analysis of a more complex and realistic  receiver in molecular communication.
 
\appendices

\section{Proof of Theorem 1}\label{app_gsc_exact}

We first partition the spherically symmetric distribution into two parts using the method applied in \cite{schulten2000lectures}
\begin{align}
r{\cdot}C\left( {r,\left. t \right|{r_0}} \right) = r{\cdot}g\left( {r,\left. t \right|{r_0}} \right) 
+ r{\cdot}h\left( {r,\left. t \right|{r_0}} \right)
, \label{der1}
\end{align}
where
\begin{align}
g\left( {r,\left. {t \to 0} \right|{r_0}} \right) = \frac{1}{{4\pi {r_0}}}\delta \left( {r - {r_0}} \right)
, \label{der11}
\end{align}
\begin{align}
h\left( {r,\left. {t \to 0} \right|{r_0}} \right) = 0
. \label{der12}
\end{align}

Then, by substituting \eqref{der1} into \eqref{ficklaw},  we have
\begin{align}
\frac{{\partial \left( {r \cdot g\left( {r,\left. t \right|{r_0}} \right)} \right)}}{{\partial t}} = D\frac{{{\partial ^2}\left( {r \cdot g\left( {r,\left. t \right|{r_0}} \right)} \right)}}{{\partial {r^2}}}
, \label{der2}
\end{align}
and
\begin{align}
\frac{{\partial \left( {r \cdot h\left( {r,\left. t \right|{r_0}} \right)} \right)}}{{\partial t}} = D\frac{{{\partial ^2}\left( {r \cdot h\left( {r,\left. t \right|{r_0}} \right)} \right)}}{{\partial {r^2}}}
. \label{der3}
\end{align}

To derive $g\left( {r,\left. t \right|{r_0}} \right)$, we perform a Fourier transformation on $ r g\left( {r,\left. t \right|{r_0}} \right)$ to yield
\begin{align}
G\left( {k,\left. t \right|{r_0}} \right) = \int_{ - \infty }^\infty  {rg\left( {r,\left. t \right|{r_0}} \right){e^{ - ikr}}dr}
, \label{der41}
\end{align}
and
\begin{align}
r \cdot g\left( {r,\left. t \right|{r_0}} \right) = \frac{1}{{2\pi }}\int_{ - \infty }^\infty  {G\left( {k,\left. t \right|{r_0}} \right){e^{ikr}}dk} 
. \label{der42}
\end{align}

We then perform the Fourier transformation on \eqref{der2} to yield
\begin{align}
\frac{{dG\left( {k,\left. t \right|{r_0}} \right)}}{{dt}} =  - D{k^2}G\left( {k,\left. t \right|{r_0}} \right)
. \label{der5}
\end{align}

According to \eqref{der5} and the uniqueness of the Fourier transform, we derive 
\begin{align}
G\left( {k,\left. t \right|{r_0}} \right) = {K_g}\exp \left\{ { - D{k^2}t} \right\}
,\label{der51}
\end{align}
where $K_g$ is an undetermined constant.

The Fourier transformation  performed on \eqref{der11} yields
\begin{align}
G\left( {r,\left. {t \to 0} \right|{r_0}} \right) = \frac{1}{{4\pi {r_0}}}{e^{ - ik{r_0}}}
. \label{der111}
\end{align}

Combining \eqref{der51} and \eqref{der111}, we arrive at 
\begin{align}
G\left( {k,\left. t \right|{r_0}} \right) = \frac{1}{{4\pi {r_0}}}{e^{ - ik{r_0}}}\exp \left\{ { - D{k^2}t} \right\}
. \label{der6}
\end{align}

Substituting \eqref{der6} into \eqref{der42}, we find that
\begin{align}
r \cdot g\left( {r,\left. t \right|{r_0}} \right) = \frac{1}{{8\pi {r_0}\sqrt {\pi Dt} }}\exp \left\{ { - \frac{{{{\left( {r - {r_0}} \right)}^2}}}{{4Dt}}} \right\}
. \label{der52}
\end{align}

By performing the Laplace transform on  \eqref{der52}, we write
\begin{align}
\mathcal{L}\left\{ {r \cdot g\left( {r,\left. t \right|{r_0}} \right)} \right\} = \frac{1}{{4\pi {r_0}\sqrt {4Ds} }}\exp \left\{ { - \left| {r - {r_0}} \right|\sqrt {\frac{s}{D}} } \right\}
. \label{der53}
\end{align}

We then focus on solving the solution $h\left( {k,\left. t \right|{r_0}} \right)$ by first performing the Laplace transform on $h\left( {k,\left. t \right|{r_0}} \right)$ and \eqref{der3} as
\begin{align}
H\left( {r,\left. s \right|{r_0}} \right)= \mathcal{L}\left\{ {  h\left( {r,\left. t \right|{r_0}} \right)} \right\} = \int_0^\infty  {h\left( {r,\left. t \right|{r_0}} \right){e^{ - s\tau }}d\tau } 
,\label{der7}
\end{align}
and 
\begin{align}
srH\left( {r,\left. s \right|{r_0}} \right) = D\frac{{{\partial ^2}\left( {rH\left( {r,\left. s \right|{r_0}} \right)} \right)}}{{\partial {r^2}}}
,\label{der8}
\end{align}
respectively.

According to \eqref{der8}, the Laplace transform of the solution with respect to the boundary condition in \eqref{der8} is 
\begin{align}
rH\left( {r,\left. s \right|{r_0}} \right) = f\left( s \right)\exp \left\{ { - \sqrt {\frac{s}{D}} r} \right\}
,\label{der9}
\end{align}
where $f\left( s \right)$ needs to satisfy the second initial condition in \eqref{initial2}, and the second boundary condition in  \eqref{boundary1} and \eqref{boundary2}.

Having the Laplace transform of $\{r \cdot g\left( {r,\left. t \right|{r_0}} \right)\}$ and ${  h\left( {r,\left. t \right|{r_0}} \right)}$ in \eqref{der53} and \eqref{der9}, and performing a Laplace transformation on \eqref{der1}, we derive 
\begin{align}
&r\tilde C\left( {{r},\left. s \right|{r_0}} \right) = G\left( {r,\left. s \right|{r_0}} \right) + rH\left( {r,\left. s \right|{r_0}} \right)
\nonumber\\ &=   \frac{1}{{8\pi {r_0}\sqrt {Ds} }}\exp \left\{ { - \left| {r - {r_0}} \right|\sqrt {\frac{s}{D}} } \right\} 
+ f\left( s \right)\exp \left\{ { - \sqrt {\frac{s}{D}} r} \right\},
\label{der10}
\end{align}
where $\tilde C\left( {r,\left. s \right|{r_0}} \right) = \int_0^\infty  {C\left( {r,\left. t \right|{r_0}} \right){e^{ - st}}dt} $.

To solve $f\left( s \right)$, we perform the Laplace transform on the Robin boundary condition in \eqref{boundary3} to yield
\begin{align}
{{\tilde C}_a}\left( {\left. s \right|{r_0}} \right) = \frac{{{k_1}\tilde C\left( {{r_r},\left. s \right|{r_0}} \right)}}{{s + {k_{ - 1}}}}
,\label{der1111}
\end{align}
where  ${{\tilde C}_a}\left( {r,\left. s \right|{r_0}} \right) = \int_0^\infty  {{C_a}\left( {r,\left. t \right|{r_0}} \right){e^{ - st}}dt} $.

We then perform the Laplace transform on  the second initial condition in \eqref{initial2} and the second boundary condition in \eqref{boundary1} as
\begin{align}
{\left. {D\frac{{\partial \left( {\tilde C\left( {r,\left. t \right|{r_0}} \right)} \right)}}{{\partial r}}} \right|_{r = {r_r}}} = {k_1}\tilde C\left( {{r_r},\left. s \right|{r_0}} \right) - {k_{ - 1}}{{\tilde C}_a}\left( {\left. s \right|{r_0}} \right). \label{der121}
\end{align}

Substituting \eqref{der111} into \eqref{der121}, we obtain
\begin{align}
{\left. {D\frac{{\partial \left( {\tilde C\left( {r,\left. t \right|{r_0}} \right)} \right)}}{{\partial r}}} \right|_{r = {r_r}}}   =   \frac{{{k_1}s}}{{s + {k_{ - 1}}}}\tilde C\left( {{r_r},\left. s \right|{r_0}} \right)
.\label{der12}
\end{align}

To facilitate the analysis, we express the Laplace transform on the second boundary condition  as 
\begin{align}
{\left. {\frac{{\partial \left( {r \cdot \tilde C\left( {r,\left. s \right|{r_0}} \right)} \right)}}{{\partial r}}} \right|_{r = {r_r}}} = \left( {1 + \frac{{{r_r}{k_1}s}}{{D\left( {s + {k_{ - 1}}} \right)}}} \right)\tilde C\left( {r,\left. s \right|{r_0}} \right)
.\label{der13}
\end{align}

Substituting \eqref{der10} into \eqref{der13}, we determine $f\left( s \right)$ as
\begin{align}
&f\left( s \right) = \frac{{\left( {\sqrt {\frac{s}{D}}  - \frac{1}{{{r_r}}} - \frac{{{k_1}s}}{{D\left( {s + {k_{-1}}} \right)}}} \right)}}{{\left( {\sqrt {\frac{s}{D}}  + \frac{1}{{{r_r}}} + \frac{{{k_1}s}}{{D\left( {s + {k_{-1}}} \right)}}} \right)}}\frac{{\exp \left\{ { - \left( {{r_0} - 2{r_r}} \right)\sqrt {\frac{s}{D}} } \right\}}}{{4\pi {r_0}\sqrt {4Ds} }}.
\label{der15}
\end{align}

Having \eqref{der10} and \eqref{der15}, and performing the Laplace transform of the  concentration distribution,  we derive 
\begin{align}
&r \tilde C\left( {r,\left. s \right|{r_0}} \right) = \frac{1}{{4\pi {r_0}\sqrt {4Ds} }}\exp \left\{ { - \left| {r - {r_0}} \right|\sqrt {\frac{s}{D}} } \right\}
\nonumber\\& \hspace{0.5cm} + \frac{1}{{4\pi {r_0}\sqrt {4Ds} }}\exp \left\{ { - \left( {r + {r_0} - 2{r_r}} \right)\sqrt {\frac{s}{D}} } \right\}
\nonumber\\& \hspace{0.5cm} - \underbrace {  \frac{{2\left( {\frac{1}{{{r_r}}} + \frac{{{k_1}s}}{{D\left( {s + {k_{-1}}} \right)}}} \right)}}{{\left( {\frac{1}{{{r_r}}} + \frac{{{k_1}s}}{{D\left( {s + {k_{-1}}} \right)}} + \sqrt {\frac{s}{D}} } \right)}}\frac{{\exp \left\{ { - \left( {r + {r_0} - 2{r_r}} \right)\sqrt {\frac{s}{D}} } \right\}}}{{4\pi {r_0}\sqrt {4Ds} }}}_{Z\left( s \right)}.
\label{der16}
\end{align}

Applying the inverse Laplace transform  leads to 
\begin{align}
&r C\left( {r,\left. s \right|{r_0}} \right) = \frac{1}{{8\pi {r_0}\sqrt {\pi Dt} }}\exp \left\{ { - \frac{{{{\left( {r - {r_0}} \right)}^2}}}{{4Dt}}} \right\} +
\nonumber\\& \hspace{0.5cm} \frac{1}{{8\pi {r_0}\sqrt {\pi Dt} }}\exp \left\{ { - \frac{{{{\left( {r + {r_0} - 2{r_r}} \right)}^2}}}{{4Dt}}} \right\} -{\mathcal{L}^{ - 1}}\left\{ {Z\left( s \right)} \right\}.
\label{der17}
\end{align}

Due to the complexity of $Z(s)$, we can not derive the closed-form expression for its inverse Laplace transform ${f_z}\left( t \right) = {\mathcal{L}^{ - 1}}\left\{ {Z\left( s \right)} \right\}$. We  employ the Gil-Pelaez theorem \cite{wendel1961} for the characteristic function to derive  the cumulative distribution function (CDF) ${F_z}\left( t \right) $ as
\begin{align}
{F_z}\left( t \right) 
  =  & \frac{1}{2} - \frac{1}{\pi }\int_0^\infty  {\frac{{{e^{ - jwt}}\varphi _Z^*\left( w \right) - {e^{jwt}}{\varphi _Z}\left( w \right)}}{{2jw}}} dw,
\label{der181}
\end{align}
where 
${\varphi _Z}\left( w \right)$ is given in \eqref{der19}.

Taking the derivative of ${F_z}\left( t \right)$, we derive the inverse Laplace transform of $Z(s)$ as
\begin{align}
&{f_z}\left( t \right) = \frac{1}{{2\pi }}\int_0^\infty  {\left( {{e^{ - jwt}}\varphi _Z^*\left( w \right) + {e^{jwt}}{\varphi _Z}\left( w \right)} \right)dw} .
\label{der18}
\end{align}

Combining \eqref{der17} and \eqref{der19}, we finally derive the expected time-varying spatial distribution in \eqref{der20}.
%

\balance
\end{document}